\def\jmod#1 #2{#1\ ({\rm mod}\ #2)}
\def\prend{\vrule depth-1pt height7pt width6pt}
\def \endpf{{\hfill\prend\bigbreak}}
\newtheorem{theorem}{Theorem}
\newtheorem{lemma}[theorem]{Lemma}
\newtheorem{corollary}[theorem]{Corollary}
\newtheorem{examp}[theorem]{Example}
\newenvironment{example}{\begin{examp}\normalfont\quad}{\normalsize\end{examp}}
\begin{document}

\begin{frontmatter}

\title{Periodicity, Repetitions, and Orbits of an Automatic Sequence}

\author[France]{Jean-Paul Allouche},
\ead{allouche@lri.fr}
\author[Winnipeg]{Narad Rampersad},
\ead{n.rampersad@uwinnipeg.ca}
\author[Waterloo]{Jeffrey Shallit\corauthref{cor}}
\corauth[cor]{Corresponding author.}
\ead{shallit@cs.uwaterloo.ca}
\address[France]{CNRS, LRI, UMR 8623, Universit\'e Paris-Sud, B\^atiment 490, F-91405 Orsay Cedex, France}
\address[Winnipeg]{Department of Mathematics and Statistics, University of Winnipeg, 515 Portage Avenue, Winnipeg, MB R3B 2E9, Canada}
\address[Waterloo]{School of Computer Science, University of Waterloo, Waterloo, Ontario  N2L 3G1, Canada}

\begin{abstract}
We revisit a technique of S. Lehr on automata and use it to prove old
and new results in a simple way.  We give a very simple proof of the
1986 theorem of Honkala that it is decidable whether a given
$k$-automatic sequence is ultimately periodic.  We prove that it is
decidable whether a given $k$-automatic sequence is overlap-free (or
squarefree, or cubefree, etc.) We prove that the lexicographically
least sequence in the orbit closure of a $k$-automatic sequence is
$k$-automatic, and use this last result to show that several related
quantities, such as the critical exponent, irrationality measure, and
recurrence quotient for Sturmian words with slope
$\alpha$, have automatic
continued fraction expansions if $\alpha$ does.
\end{abstract}

\begin{keyword}
automatic sequence, squarefree, overlapfree, Thue-Morse sequence,
Rudin-Shapiro sequence, decidability,
periodicity, orbit, orbit closure, continued fraction
\end{keyword}

\end{frontmatter}

\centerline{\it In Honor of Juhani Karhum\"aki's 60th Birthday}

\def\orb{{\rm Orb}}
\def\cl{{\rm Cl}}

\section{Introduction}
\label{intro-sec}

A sequence $(a_n)_{n\geq 0}$ over a finite alphabet $\Delta$ is said to
be {\it $k$-automatic} for some integer $k \geq 2$ if, roughly
speaking, there exists an automaton that, on input $n$ in base $k$,
reaches a state with the output $a_n$.
More formally, a sequence $(a_n)_{n\geq 0}$ over $\Delta$
is $k$-automatic if there exists a 
deterministic finite automaton with output (DFAO) $M = (Q, \Sigma_k,
\Delta, \delta, q_0, \tau)$ where $Q$ is a finite set of states,
$\Sigma_k = \lbrace {\tt 0,1,2,}\ldots, k-1 \rbrace$,
$\delta : Q \times \Sigma_k \rightarrow Q$ is the transition function,
and $\tau:Q \rightarrow \Delta$ is the output function, such that if
$w$ is any base-$k$ representation of $n$, possibly with leading zeroes,
then $a_n = \tau(\delta(q_0, w^R))$.  (Note that $a_0 = \tau(q_0)$.)
Here $w^R$ is the reverse of the word $w$.

This class of sequences, also
called $k$-recognizable in the literature, has been studied extensively
(e.g., \cite{Allouche&Shallit:2003}) and
has several different characterizations, the most
famous being images (under a coding)
of fixed points of $k$-uniform morphisms.

The archetypal example of a $k$-automatic sequence is the {\it Thue-Morse
sequence} 
$${\bf t} = (t_n)_{n \geq 0} = {\tt 0110100110010110} \cdots,$$
where $t_n$ is the sum (modulo $2$) of the bits in the base-$2$ expansion
of $n$ \cite{Allouche&Shallit:1999}.  See Figure~\ref{tm}.
It can also be viewed as the
fixed point of the morphism $\mu$ where ${\tt 0} \rightarrow {\tt 01}$
and ${\tt 1} \rightarrow {\tt 10}$.

\begin{figure}[H]
\begin{center}
\input orb2.tex
\end{center}
\caption{Automaton generating the Thue-Morse sequence}
\label{tm}
\end{figure}

Given a $k$-automatic sequence, one might reasonably inquire as to whether the
sequence is ultimately periodic.  More precisely, we would like to know if
the problem

\centerline{Given a $k$-automatic sequence, is it ultimately periodic?}

\noindent is decidable (i.e., recursively solvable).
This problem was solved by Honkala \cite{Honkala:1986},
who gave a rather complicated decision procedure.

In this paper, we begin by recalling a technique of Lehr
\cite{Lehr:1993}
as simplified by Allouche and Shallit
\cite[pp.\ 380--382]{Allouche&Shallit:2003}.  
In Section~\ref{ultimate} we introduce it and use it
to reprove the result of Honkala mentioned above.

Another topic of great interest is the pattern-avoiding properties of 
certain automatic sequences.  For example, more than a hundred years ago
Thue proved \cite{Thue:1906,Thue:1912} that $\bf t$ contains no overlaps,
where an {\it overlap} is a word of the form $axaxa$, where $a$ is a single
letter and $x$ is a word, possibly empty.    Examples of overlaps include
{\tt alfalfa} in English, {\tt entente} in French, and 
{\tt ajaja} and {\tt tutut} in Finnish.

Similarly, much attention has been given to avoiding squares.  A {\it square} is
a word of the form $xx$ where $x$ is nonempty.  
Examples of squares include
{\tt murmur} in English, {\tt chercher} in French, and {\tt valtavalta} in Finnish.
A (finite or infinite) word is {\it squarefree} if it contains no square
factor.  As is well-known, if one counts the lengths
of the blocks of $1$'s between consecutive $0$'s in $\bf t$,
one obtains the squarefree sequence
$${\bf v} = (v_n)_{n \geq 0} = {\tt 210201210120} \cdots  .$$
The word {\bf v} is generated as the fixed point of the morphism $g$ 
defined by
${\tt 2} \rightarrow {\tt 210}$, ${\tt 1} \rightarrow {\tt 20}$,
and ${\tt 0} \rightarrow {\tt 1}$.  
Furthermore, $\bf v$ is generated by the automaton depicted
in Figure~\ref{sqf}.  Here the input is $n$ expressed in base $2$,
starting with the least significant digit, and the output, given by the symbol
labeling the state, is $v_n$.  (Contrast this with the representation
given by Berstel \cite{Berstel:1978}.)

\begin{figure}[H]
\begin{center}
\input orb1.tex
\end{center}
\caption{Automaton generating a squarefree sequence}
\label{sqf}
\end{figure}

We can generalize the concept of power to non-integer powers.  Let $\alpha$
be a real number $> 1$.  We say that
a word $z$ is an {\it $\alpha$-power} if it is the shortest
prefix of length $\geq \alpha |x|$
of some infinite word $x^\omega = xxx \cdots$, and we say it is an 
{\it $\alpha^+$-power}
if it is the shortest prefix of length $> \alpha |x|$ of $x^\omega$.  
For example,  the English word
$z = {\tt abracadabra}$ is both a $3/2$ and a $(3/2)^+$ power, as 
$z$ is a prefix of length $11$ of $({\tt abracad})^\omega$, and
$10/7 < 3/2 < 11/7$.  Using this notation, an overlap is a $2^+$ power.
We say a (finite or infinite) word $z$ contains
an $\alpha$-power if we can write $z = uvw$ where $v$ is an $\alpha$-power.
We say that a (finite or infinite) word $z$ {\it avoids $\alpha$-powers}
or is {\it $\alpha$-power-free}
if it has no factor that is an $\alpha$-power, and similarly for
$\alpha^+$-powers.

In Section~\ref{squarefree} we use Lehr's technique to prove a new result:
that it is decidable whether a given $k$-automatic sequence is squarefree,
overlap-free, contains an $r$-power for $r$ rational,
contains an $r^+$-power, etc.

Let ${\bf a} = (a_n)_{n \geq 0}$ be a sequence over a finite alphabet
$\Delta$.  The {\it orbit} of $\bf a$, written $\orb ({\bf a})$, is the
set of all its shifts, that is, the set of sequences
$\lbrace (a_{n+i})_{n \geq 0} \ : \ i \geq 0 \rbrace$.
The {\it orbit closure} of $\bf a$, written $\cl( \orb ({\bf a}))$ is the
closure of $\orb ({\bf a})$ under the usual topology where two sequences
are close if they agree on a long prefix.  More transparently, a sequence
${\bf b} = (b_n)_{n \geq 0}$ is in the orbit closure of $\bf a$ if and only
if every finite prefix of $\bf b$ is a factor of $\bf a$ \cite[Prop.\ 10.8.9,
p.\ 327]{Allouche&Shallit:2003}.

An infinite word $\bf a$ is said to be {\it recurrent} if every finite factor
that occurs in $\bf a$ occurs infinitely often.  It is not hard to see that
if $\bf a$ is recurrent and not periodic, then $\cl( \orb ({\bf a}))$
is uncountable \cite[Thm.\ 10.8.12, p.\ 328]{Allouche&Shallit:2003}.  If
$\bf a$ is not recurrent this may not be true; for example, consider
the infinite word ${\bf c} = {\tt abaabaaabaaaab} \cdots$.  Then
$\cl( \orb ({\bf c}))$ is countable because once a finite factor contains
two or more $\tt b$'s, its position in $\bf c$ is fixed and hence can be
extended in at most one way.  Thus $\cl( \orb ({\bf c}))$ equals
${\tt a}^\omega \ \cup \ {\tt a}^* {\tt b} {\tt a}^\omega \ \cup \ 
\orb ({\bf c})$, and hence is countable.

In Section~\ref{orbit} we are interested in elements in the orbit closure of
automatic sequences.   From the result mentioned above, if $\bf a$ is
recurrent, then ``most'' of the sequences in $\cl( \orb ({\bf a}))$ cannot
be $k$-automatic for any $k$, since the orbit closure is uncountable while
the set of $k$-automatic sequences over $\Delta$ is countable.  Evidently,
this is true even if $\bf a$ itself is not automatic.  

Now suppose that $\bf a$ is $k$-automatic, and consider the lexicographically
least sequence $\bf b$ in $\cl( \orb ({\bf a}))$.  We show in Section~\ref{orbit} that
$\bf b$ is also $k$-automatic, and more generally, any sequence chosen
in a periodic way from the factor tree of $\bf a$ is also $k$-automatic.

\section{Periodicity}
\label{ultimate}

Let ${\bf a} = (a_n)_{n \geq 0}$ be an infinite sequence.  Then
$\bf a$ is {\it ultimately periodic} if there exist integers
$P \geq 1, N \geq 0$ such that $a_i = a_{i+P}$ for all $i \geq N$.

\bigskip

\begin{theorem}
Given a DFAO
$M= (Q, \Sigma_k, \Delta, \delta, q_0, \tau)$
it is decidable if the $k$-automatic sequence it generates is
ultimately periodic.
\label{periodic}
\end{theorem}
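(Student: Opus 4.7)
The plan is to use Lehr's technique to reduce the question to a finiteness check on a regular language for each candidate period $P$, and then to bound the candidates by a computable function of $|Q|$.

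For each fixed $P \geq 1$, I would build from $M$ an auxiliary DFA $A_P$ over $\Sigma_k$ that reads the base-$k$ representation of an integer $i \geq 0$ (least-significant-digit-first) and accepts iff $a_i \neq a_{i+P}$. The construction simulates $M$ on both $i$ and $i+P$ in parallel: one copy runs directly on the input digits of $i$, while the other runs on the digits of $i+P$, which are produced on the fly by digitwise addition with the fixed digits of $P$ together with a single carry bit. The resulting $A_P$ has size polynomial in $|Q|$ and $\log P$: its states track the two current states of $M$, the remaining unread digits of $P$, and the carry. By construction, ${\bf a}$ is ultimately periodic with period $P$ iff $L(A_P)$ is finite, and finiteness of a regular language is decidable in polynomial time by checking whether the trim part of $A_P$ contains a cycle on a path from the start state to an accepting state.

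To turn this into a decision procedure, I would then establish a computable upper bound $B(|Q|,k)$ on the minimum eventual period of any ultimately periodic sequence generated by a DFAO with $|Q|$ states. The key observation is that such a sequence has exactly $N+P_0$ distinct shifts, where $N$ is the preperiod and $P_0$ the minimum period; every shift of a $k$-automatic sequence is itself $k$-automatic via a DFAO whose size is controlled by $|Q|$ (by tracking how the shift-by-$1$ operation acts on base-$k$ digits and perturbs the initial state); and only finitely many such DFAOs exist up to equivalence. This yields the desired bound $B$, after which the procedure enumerates $P = 1, 2, \ldots, B$, constructs each $A_P$, and answers ``yes'' iff $L(A_P)$ is finite for at least one $P$ in this range.

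The main obstacle is the second step, the effective bound on the minimum period. The parallel simulation in the construction of $A_P$ is routine, as is the finiteness test on a regular language, but the technical content of the argument lies in controlling the number of inequivalent shifts of a $k$-automatic sequence in terms of $|Q|$, so that the search over $P$ can be truncated at a computable point.
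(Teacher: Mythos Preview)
Your Step~1 (building $A_P$ for fixed $P$ and testing finiteness of $L(A_P)$) is fine, but Step~2 has a genuine gap. The claim that every shift of a $k$-automatic sequence has a DFAO ``whose size is controlled by $|Q|$'' is false in the uniform sense you need: computing $a_{n+m}$ from the base-$k$ digits of $n$ requires hard-coding the digits of $m$ (or an equivalent amount of information), so the natural DFAO for the $m$-th shift has size growing like $|Q|\cdot\log_k m$. Indeed, a non-ultimately-periodic $k$-automatic sequence has infinitely many distinct shifts, so they cannot all be generated by DFAOs of a single bounded size. Your argument is therefore circular: you want to bound $N+P_0$ by counting DFAOs of bounded size, but the size bound you need already depends on $N+P_0$. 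A computable bound on the minimum period \emph{does} exist (this is essentially what Honkala's original proof establishes), but obtaining it requires real work and is not a consequence of the shift observation you sketch.

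The paper sidesteps this difficulty entirely, and this is precisely the point of Lehr's technique as used here. Rather than fixing $P$ and searching, the paper treats the pair $(P,N)$ as the \emph{input} to a single automaton: an NFA reads $(P,N)$ in base~$k$ in parallel, nondeterministically guesses $I\geq N$ digit by digit, simulates $M$ on both $I$ and $I+P$, and accepts if $a_I\neq a_{I+P}$. Determinizing and complementing yields a DFA that accepts $(P,N)$ iff $a_I=a_{I+P}$ for all $I\geq N$, and ultimate periodicity reduces to a single emptiness test on this DFA (intersected with $P\geq 1$). No a~priori bound on $P$ is needed; the quantifier over $P$ is handled by making $P$ part of the input rather than by an external search. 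Your construction of $A_P$ is essentially the paper's construction with $P$ frozen; unfreezing it is exactly what eliminates the problematic Step~2.
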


    As mentioned before, this result is due to Honkala \cite{Honkala:1986}.
We give a new proof.

\begin{proof}
We start with a sketch of the proof.  First, we construct an NFA $M_1$ that
on input $(P,N)$ ``guesses'' $I$ and accepts if $I \geq N$ and
$a_I \not= a_{I+P}$.  We now convert $M_1$ to a DFA $M_2$ using the
usual subset construction, and then interchange accepting and non-accepting
states, obtaining a DFA $M_3$ with the property that $M_3$ accepts
$(P,N)$ if and only if $a_I = a_{I+P}$ for all $I \geq N$.  Now $\bf a$
is ultimately periodic if and only if $M_3$ accepts some input, which can
be checked using the usual depth-first search technique to determine if there
is a path from $M_3$'s initial state to a final state.

We now give the proof in detail, addressing concerns such as exactly how
$P$ and $N$ are represented, what it means to guess $I$,
how we verify that $I \geq N$, how we compute
$I+P$, and what if $I$ is significantly larger than $P$ or $N$.

When we say that $M_1$ takes $(P,N)$ as input, what we really mean is that
the input alphabet of $M_1$ is $\Sigma_k \times \Sigma_k$, so that 
$M_1$ takes as input the base-$k$ digits of $P$ and $N$ {\it in parallel}.
More precisely, the input is $(p_0, n_0)(p_1, n_1) \cdots (p_j, n_j)$ where
$n_j n_{j-1} \cdots n_0$ is a base-$k$ representation of $N$ and
$p_j p_{j-1} \cdots p_0$ is a base-$k$ representation of $P$, either or
both padded with leading zeros to ensure that their lengths are the same.
This means that $(P,N)$ can be input in infinitely many ways, depending on the
number of leading zeros (which are actually trailing zeros since we read the input starting with the least significant digit),
and we must ensure that the correct result is returned in each case.  

When we say we guess $I$, what we really mean is that we 
successively guess the base-$k$ digits of $I$, starting with the least
significant digit.

In order to verify that our guessed $I$ is $\geq N$, we maintain a flag
that records how the number represented by the digits of $I$ seen so far
stands in relation to the digits of $N$ seen so far:  whether it is
$<$, $=$, or $>$.  The flag is updated as follows, if the next digit of
$I$ guessed is $i'$ and the next digit of $N$ is $n'$:

\begin{eqnarray}
u(<,i', n') &=&  
	\begin{cases}
	<,  & \text{if } i' \leq n' ; \\
	>,  & \text{if } i' > n';
	\end{cases} \nonumber \\
u(=,i',n') &=& 
	\begin{cases}
	<, & \text{if } i' < n' ; \\
	=, & \text{if } i' = n' ; \\
	>, & \text{if } i' > n' ;
	\end{cases} \label{update} \\
u(>,i',n') &=& 
	\begin{cases}
	<, & \text{if } i' < n' ; \\
	>, & \text{if } i' \geq n'.
	\end{cases} \nonumber
\end{eqnarray}

To compute $I+P$, we maintain a ``carry'' bit, and compute $I+P$ digit-by-digit
as we see the digits of $P$ input using the usual pencil-and-paper method.

Finally, since we guess the digits of $I$ in parallel with the digits
of the inputs $P$ and $N$, we have to address the situation where the
base-$k$ representation of the appropriate $I$ to guess is longer than
the representation of the inputs $P$ and $N$.  If we do not pad $P$ and
$N$ with enough $0$'s, we might 
return the wrong result.  To handle this, we modify the acceptance criterion
of the NFA $M_1$,  making a state accepting 
if an accepting state could  be reached by any input of the
form $(0,0)^j$, $j \geq 0$.    

We now give the construction in more detail.  Suppose
$M = (Q, \Sigma_k, \Delta, \delta, q_0, \tau)$ is a $k$-DFAO.  
We make an NFA $M_1 = (Q', \Sigma_k \times \Sigma_k, \delta', q'_0, F')$
as follows.   

\begin{eqnarray*}
Q' &= & \lbrace <, =, > \rbrace \times \lbrace 0,1 \rbrace \times 
Q \times Q ; \\
q'_0 & = & [=, 0, q_0, q_0]; \\ 
F' & = & \lbrace [b,0,q,r] \ : \ b \in \lbrace >, = \rbrace \text{ and }
	\tau(q) \not= \tau(r) \rbrace \\
\end{eqnarray*}

The meaning of a state $[b,c,q,r]$ of $Q'$ is that $b$ is the flag
maintaining the relationship between $I$ and $N$; $c$ is the carry bit
in the computation of $I+P$; $q$ is the state in $M$ reached by the bits
of $I$ seen so far; and $r$ is the state in $M$ reached by the bits of
$I+P$ calculated so far.

We define $\delta'$ by
$\delta'([b,c,q,r], (n', p')) :=$
$$
\lbrace [u(b,i',n'), \lfloor {{i'+p'+c} \over k} \rfloor,
\delta(q,i), \delta(r, (i'+p'+c) \bmod k) ] \ : \ 0 \leq i' < k \rbrace.$$
Here $u$ is the update map defined in Eq.~(\ref{update}).  

This finishes the construction of the NFA $M_1$.  We now create a new
NFA $M'_1$ that is exactly the same as $M_1$, except that it has a new
set of final states $\hat{F'}$ defined by
$$\hat{F'} := \lbrace [b,c,q,r] \ : \ \text{ there exists } j \geq 0
\text{ such that } \delta'([b,c,q,r], (0,0)^j ) \in F' \rbrace.$$

We now convert $M'_1$ to a DFA $M_2 = (Q'', \Sigma_k\times \Sigma_k,
\delta'', q''_0, F'')$
using the usual subset construction.  We define
$M_3 = (Q'', \Sigma_k\times \Sigma_k,
\delta'', q''_0, Q''-F'').$  It is not hard to see that 
$M_3$ accepts some input $(P,N)$ with $P \geq 1$
if and only if $\bf a$ is ultimately periodic.
This can be checked by creating a DFA $M_4$ that
accepts $(\Sigma_k^* (\Sigma_k - \lbrace {\tt 0} \rbrace)\Sigma_k^*) \times
\Sigma_k^*$ and, using the usual direct product construction, creating
a DFA $M_5$ that accepts $L(M_3) \ \cap \ L(M_4)$.  Then $\bf a$ is 
ultimately periodic if and only if $M_5$ accepts some string, and this can
be checked using the usual depth-first search to look for a path connecting
the initial state with some final state.
\endpf
\end{proof}

\section{Decision problems about repetitions}
\label{squarefree}

     A morphism $h:\Sigma^* \rightarrow \Delta^*$ is said to be
$k$-power-free if whenever $w$ is $k$-power-free, so is $h(w)$.  
There is a reasonably large literature about these morphisms, with most
investigators concentrating on giving computable characterizations
of such morphisms; see, for example,
\cite{Berstel:1979,Crochemore:1982,Karhumaki:1983,Leconte:1985,Richomme&Wlazinski:2007}.

     We say a morphism $h:\Sigma^* \rightarrow \Sigma^*$
is {\it prolongable} on a letter $a$ if $h(a) = ax$ for some $x$ such that
$h^i(x) \not= \epsilon$ for all $i \geq 0$.    In this case
there is a unique infinite word with prefixes $h^i (a)$ for all $i \geq 0$,
which we write as $h^\omega (a)$.  Such a word is called {\it morphic}.
It is also of interest to
give computable characterizations of those $h$ for which $h^\omega (a)$
avoids various kind of repetitions.  (Note that it is possible for $h^\omega(a)$
to, for example, avoid squares, even if $h$ itself is not squarefree.
The morphism $g$ given above in Section~\ref{intro-sec} provides an
example.  Here ${\tt 212}$ is squarefree, but $g({\tt 212})$ is not.)

Berstel \cite{Berstel:1979} showed how
to decide if $h^\omega (a)$ is squarefree for three-letter alphabets.
Karhum\"aki \cite{Karhumaki:1983} showed how to decide if
$h^\omega(a)$ is overlap-free for two-letter alphabets.  
Later, Mignosi and S\'e\'ebold \cite{Mignosi&Seebold:1993}
gave a general algorithm for testing the $k$-power-freeness of 
$h^\omega(a)$ for arbitrary non-erasing
morphisms $h$ and integers $k \geq 2$.   Cassaigne \cite{Cassaigne:1994}
showed how to test if certain kinds of HD0L words avoid arbitrary patterns.

     The technique of Section~\ref{ultimate} can be modified to create
a decision procedure for the existence of many kinds of
repetitions in $k$-automatic sequences.    Our approach is both more and less
general than previous results in the literature.
It is less general because our technique
works only for uniform morphisms.  It is
more general because (a) it works not only for fixed points of uniform
morphisms, but also images of those fixed points (under a coding);
(b) it works for testing the $r$-power-freeness and $r^+$-power-freeness
of words, where $r$ is an arbitrary rational number $> 1$ -- a topic 
relatively 
unexplored in the literature until now
(but see \cite{Krieger:2009,Krieger:2007}); and (c) it works for
arbitrary alphabets.
We do not know how to make
our technique work for $r$ an irrational number.

The following theorem illustrates the technique.

\bigskip

\begin{theorem}
     The following question is decidable:  given a $k$-automatic
sequence ${\bf a} = (a_n)_{n \geq 0}$ represented by a DFAO, is
${\bf a}$ overlap-free?
\label{overl}
\end{theorem}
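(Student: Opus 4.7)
My plan follows the template of the proof of Theorem~\ref{periodic}. Observe that $\bf a$ contains an overlap iff there exist integers $i\ge 0$ and $p\ge 1$ such that $a_{i+j}=a_{i+j+p}$ for every $j$ with $0\le j\le p$; equivalently the factor $a_i a_{i+1}\cdots a_{i+2p}$ has length $2p+1$ and period $p$, which is exactly of the form $axaxa$ with $|a|=1$ and $|x|=p-1$. Thus $\bf a$ is overlap-free iff no such pair $(i,p)$ exists, and the task reduces to testing emptiness of the set of such pairs.

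To that end, I would build an NFA $M_1$ with input alphabet $\Sigma_k\times\Sigma_k$ that reads the base-$k$ digits of $i$ and $p$ in parallel, least significant first, and nondeterministically guesses the digits of $j$, accepting iff $0\le j\le p$ and $a_{i+j}\ne a_{i+j+p}$. A state of $M_1$ records a comparison flag $b\in\{<,=,>\}$ tracking the relative size of the prefixes of $j$ and $p$ read so far (updated via the map $u$ from Eq.~(\ref{update})); two carry bits $c_1,c_2$, one for each of the running sums $i+j$ and $i+j+p$; and a pair $(q,r)$ of $M$-states recording the states reached in $M$ by the partial base-$k$ representations of $i+j$ and $i+j+p$, respectively. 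On reading $(i',p')$ and guessing $j'\in\{0,\ldots,k-1\}$, the transition updates each carry and each $M$-state in the obvious way, exactly as in the proof of Theorem~\ref{periodic}. A state $[b,c_1,c_2,q,r]$ is accepting iff $b\in\{<,=\}$ (so that $j\le p$), $c_1=c_2=0$, and $\tau(q)\ne\tau(r)$.

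Two subtleties must be handled. First, the correct $j$ may have a base-$k$ representation longer than the padded inputs, and the sums $i+j$ and $i+j+p$ may spill into further digit positions through outgoing carries. This is resolved by the same $\hat{F}'$-closure trick used in Theorem~\ref{periodic}: declare a state accepting if some extension by $(0,0)^\ell$ reaches an accepting state. Second, we need $p\ge 1$; I would enforce this by intersecting with the DFA-recognizable language of pairs whose $p$-component is nonzero. Determinize $M_1$ via the subset construction, complement it to obtain $M_3$, intersect with the ``$p\ge 1$'' language, and then test non-emptiness by a depth-first search. The complement accepts $(i,p)$ iff for every $j\le p$ we have $a_{i+j}=a_{i+j+p}$, which is exactly the overlap condition, so this non-emptiness test decides whether $\bf a$ contains an overlap.

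I expect the main obstacle to be the bookkeeping: coordinating two simultaneous additions with independent carries, maintaining the $j\le p$ comparison, and absorbing leading-zero ambiguity via the $(0,0)^\ell$-closure. Once these components are in place, correctness and the non-emptiness test are entirely parallel to those of Theorem~\ref{periodic}, and the same technique will clearly adapt to $r$-power-freeness and $r^+$-power-freeness for any rational $r>1$ by replacing the condition $j\le p$ with the corresponding affine inequality on $j$ and $p$.
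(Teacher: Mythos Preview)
Your proposal is correct and follows essentially the same approach as the paper's proof: the same overlap characterization, the same five-component state $[b,c_1,c_2,q,r]$, the same $(0,0)^\ell$-closure, complementation, intersection with $p\ge 1$, and non-emptiness test. One small correction: since the running sum $i+j+p$ adds three base-$k$ digits plus the incoming carry, the carry $c_2$ can take values in $\{0,1,2\}$ rather than just $\{0,1\}$ (the paper's worked example notes this explicitly).
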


\begin{proof}
    The proof is very similar to the proof of Theorem~\ref{periodic}.
The sequence ${\bf a} = (a_n)_{n \geq 0}$ contains an overlap if and only
if there exist
integers $I \geq 0, T \geq 1$ such that $a_{I+J} = a_{I+T+J}$ for all
$J$, $0 \leq J \leq T$.  

      Given a DFAO $M = (Q, \Sigma, \Delta, \delta, q_0, \tau)$
for $\bf a$, we create an NFA $M_2$ that on input $(I,T)$ 
accepts if there exists an integer $J$, $0 \leq J \leq T$, such
that $a_{I+J} \not= a_{I+T+J}$.  To accomplish this, $M_2$ 
guesses the bits of $J$, verifies that $0 \leq J \leq T$, computes
$I+J$ and $I+T+J$ on the fly, and accepts if $a_{I+J} \not= a_{I+T+J}$.
As before, we handle the problem that the expansion of $I+T+J$ might be longer
than that of $I$ or $T$ by allowing inputs with leading zeroes (actually trailing,
since inputs are entered starting with the least significant digit).
To do so, we modify the accepting states of $M_2$ to get a new NFA
$M_3$, by making a state of $M_3$ accepting if it can be reached in $M_2$
from an accepting state along a path labeled $(0,0)^j$ for some $j \geq 0$.

We now convert $M_3$ to a DFA using the subset construction, and change
all accepting states to non-accepting and vice versa, obtaining a DFA
$M_4$.  Hence $M_4$ accepts if for all $J$ with $0 \leq J \leq T$ we have
$a_{I+J} = a_{I+T+J}$; i.e., there is an overlap of length $2T+1$ beginning
at position $I$ of $\bf a$.  Thus $\bf a$ contains an overlap if and only
if $M_4$ accepts $(I,T)$ for some integers $I \geq 0$ and $T \geq 1$,
which, as before, can be easily checked.

     Here are the full details for the construction of $M_2 = (Q', \Sigma_k \times \Sigma_k,
\delta', q'_0, F')$.  The states are $5$-tuples of the form
$[b,c,d,q,r]$ where $b$ is one of $<, =$, or $>$, expressing the relationship
between the guessed $J$ and the input $T$; $c$ is the carry in the computation of
$I+J$; $d$ is the carry in the computation of $I+T+J$; $q$ is the state of
$M$ reached on input $I+J$; and $r$ is the state of $M$ reached on input
$I+T+J$.   The initial state is $q'_0 = [=, 0, 0, q_0, q_0]$, and the
set of final states is
$$F' = \lbrace [ b, 0, 0, q, r] \ : \ b \in \lbrace <, = \rbrace
\text{ and } \tau(q) \not= \tau(r) \rbrace .$$
Finally, $\delta'$ is defined as follows:
$$ \delta'([b,c,d,q,r], (i', t')) = 
\lbrace \ [ u(b,j', t'), \lfloor {{c + i' + j'} \over k} \rfloor, $$
$$
 \lfloor {{d + i' + j' + t'} \over k} \rfloor,
  \delta(q, (c + i' + j') \bmod k), \delta(r, (d + i' + j' + t') \bmod k)] \ : \ 0 \leq j' < k \rbrace.$$
  \endpf
\end{proof}

\begin{example}
Using the Grail package \cite{Raymond&Wood:1994}, version 3.3.4,
we verified purely mechanically that the Thue-Morse word $\bf t$
is overlap-free.  We carried out the construction of Theorem~\ref{overl}
by creating an NFA of $72$ states ($3$ possibilities for $b$,
$2$ for $c$, $3$ for $d$ (since carries for $d + i' + j' + t'$ could be as much as
$2$), and 2 possibilities for each of $q$ and $r$).  We added the correct final
states, and then converted this to a DFA with 801 states.  We then took the complement
of this DFA, obtaining a DFA that accepts all pairs
$(I,T)$ where there is an overlap of length $2T+1$ beginning at position
$I$.
We then minimized, obtaining
a DFA with $2$ states that only accepts strings corresponding to $T = 0$.
Hence $\bf t$ is overlap-free.
\end{example}

% Here are the details.  States are 5-tuples, [b,c,d,q,r].  b represents
% the current relation between the guessed J and T so far; it is <, =, or >
% c is the carry bit (0 or 1) for I+J
% d is the carry bit (0 or 1 or 2) for I+J+T
% q is the state in the TM automaton on reading I+J
% r is the state in the TM automaton on reading I+J+T

%Then delta'( [b,c,d,q,r], (i,t)) = [u(b,j,t), floor((c+i+j)/2),
%	floor((d+i+j+t)/2), delta(q, c+i+j mod 2), delta(r, d+i+j+t mod 2)
%
% The first set of final states is { [b,0,0,q,r] : b is < or = and q != r }
% Then we have to extend this by making all states final if you can reach a
% final state on reading any number of (0,0) inputs.

      The same idea can be used to prove each of the following results:

\bigskip

\begin{theorem}
     Given a DFAO $M$ generating a $k$-automatic sequence $\bf a$,
each of the following properties is decidable:

\begin{itemize}

\item[(a)] Given a rational number $r$, whether
$\bf a$ avoids $r$-powers (resp., $r^+$-powers);

\item[(b)] Given a rational number $r$, whether
$\bf a$ contains infinitely many occurrences of $r$-powers (resp.,
$r^+$-powers);

\item[(c)] Given a rational number $r$, whether
$\bf a$ contains infinitely many distinct $r$-powers (resp.,
$r^+$-powers);

\item[(d)] Given a rational number $r$, and a length $l$, whether
$\bf a$ avoids $x^r$ (resp., $r^+$-powers) for $|x| \geq l$;

\item[(e)] Given a rational number $r$, whether $\bf a$ avoids
$x^r$ for all sufficiently long $x$;

\item[(f)] Given a length $l$, whether $\bf a$ avoids palindromes of
length $\geq l$ 
(cf.\ \cite{Rampersad&Shallit:2005});

\item[(g)] Whether $\bf a$ avoids all sufficiently long palindromes;

\item[(h)] Given a length $l$, whether $\bf a$ satisfies the property that
$x$ is a factor of $\bf a$ of length $\geq l$, then its reverse $x^R$ is not
(cf.\ \cite{Rampersad&Shallit:2005});

\item[(i)] Assuming $\bf a$ is defined over the alphabet
$\lbrace 0, 1, \ldots, j-1 \rbrace$, whether
$\bf a$ avoids all factors of the form
$x \sigma(x)$ where $\sigma(a) = (a+1) \bmod j$ (cf.\ \cite{Loftus&Shallit&Wang:2000}).

\end{itemize}
\end{theorem}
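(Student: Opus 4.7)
The plan is to reduce each of (a)--(i) to a variant of the Lehr-style construction used in Theorems~\ref{periodic} and~\ref{overl}. In each case I express the property as the existence (or existence of infinitely many, or universality) of a small number of integer parameters satisfying a condition of the shape ``for every $J$ in an interval whose endpoints are affine functions of the parameters, certain letters $a_{X_1(J)}, a_{X_2(J)}, \ldots$ of $\bf a$ satisfy (or fail) a fixed relation,'' where each $X_i$ is an affine combination of $J$ and the parameters with fixed integer coefficients. I then build an NFA whose input alphabet is $\Sigma_k^d$, one track per parameter fed least-significant-digit first; it guesses the digits of $J$ (and any extra existential witness) in parallel, and carries in its finite state a comparison flag between the guessed $J$ and each interval endpoint, a bounded carry (or borrow) for each affine expression, and the current state of $M$ reached on each affine expression. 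The trailing-zero trick of Theorem~\ref{periodic} absorbs the case where $J$ or a computed value has more base-$k$ digits than the inputs. After subset construction, and complementation when the outer quantifier is universal, the problem reduces to emptiness, infiniteness, or finiteness of a regular language, all standard decidable questions.

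Concretely, for (a) with $r = p/q$, a factor at position $I$ is an $r$-power of period $T$ iff $a_{I+J} = a_{I+T+J}$ for every $J$ with $0 \leq J \leq \lceil rT \rceil - T - 1$; the upper bound is equivalent to the linear inequality $qJ < (p-q)T$, which a DFA can test by tracking in its state a bounded residue of $(p-q)T - qJ$. The $r^+$-power variant strictens the inequality. Parts (d), (f), and (i) are slight variations: (d) adds the regular constraint $T \geq l$; (f) tests $a_{I+J} \neq a_{I+L-1-J}$, where $I + L - 1 - J$ is computed digit-by-digit using a borrow in place of a carry; and (i) changes the accepting mismatch condition in the analogue of Theorem~\ref{overl} to $\tau(q) \neq \sigma(\tau(r))$.

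Parts (b), (c), (e), and (g) are answered by asking a regular-language question about the DFA built for (a) or its palindromic analogue. Infinitely many occurrences of an $r$-power (b) is infiniteness of the projection on the $I$-coordinate of the set of accepting tuples. Infinitely many distinct $r$-powers (c) is infiniteness of the projection on $T$: an automatic sequence has only finitely many factors of each fixed length, so distinct powers of a fixed period are bounded in number, and infinitely many distinct $r$-powers force unbounded $T$. Avoiding $x^r$ for all sufficiently long $x$ (e) is finiteness of the projection on $T$, and (g) is the analogous finiteness for palindrome lengths. Infiniteness and finiteness of a regular language are checked in the usual way by searching for a cycle reachable from the start state along a path to an accepting state.

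The only part with a genuine quantifier alternation is (h): for every factor of length $\geq l$ starting at $I$, no position $I_2$ begins the reverse of that factor. I build an NFA accepting triples $(I, L, I_2)$ with $L \geq l$ such that $a_{I+J} = a_{I_2 + L - 1 - J}$ for all $0 \leq J \leq L - 1$ (using the same machinery, with $I_2 + L - 1 - J$ handled by a borrow bit); determinize, then project out $I_2$ by the standard existential subset construction on the $I_2$-track to obtain a DFA accepting those $(I, L)$ with $L \geq l$ for which some offending $I_2$ exists, and test emptiness. The main obstacle throughout is purely book-keeping: one must verify in each construction that the state space really is finite, i.e., that every carry, borrow, and residue that appears is bounded by a constant depending only on the fixed integers $p, q, j$, and the arity $d$, not on the inputs. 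This holds because each affine expression $X_i(J)$ is a fixed $\Zee$-linear combination of the input and guessed digits with coefficients in a fixed finite set, so the carry into each successive base-$k$ digit position lives in a bounded range.
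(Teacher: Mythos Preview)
Your proof is correct and follows essentially the same approach as the paper, which simply declares parts (a)--(i) to be ``more-or-less trivial variations'' on Theorem~\ref{overl} and singles out only the linearization $qJ < (p-q)T$ for the rational-power case---the same key trick you identify. Your treatment is in fact more detailed than the paper's, spelling out the projection/infiniteness arguments for (b), (c), (e), (g), the borrow-bit subtraction for the palindrome cases (f) and (h), and the modified mismatch condition for (i), all of which the paper leaves implicit.
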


     The proofs for each part are more-or-less trivial variations
on the proof of Theorem~\ref{overl}, and we omit them.  However, we do
make one remark:
for parts (a)-(e), we need to replace the condition for the existence
of overlaps, namely,
``there exist $I \geq 0, T \geq 1$ such that $a_{I+J} = a_{I+T+J}$ for all
$J$, $0 \leq J \leq T$'' with the appropriate condition for $\alpha$-powers,
where $\alpha = {p \over q}$ is a rational number.  The new condition is
``there exist $I \geq 0, T \geq 1$ such that $a_{I+J} = a_{I+T+J}$ for all
$J$, $0 \leq J < ({p \over q} - 1)T$''.  (In the case of $\alpha^+$-powers,
the inequality becomes $0 \leq J \leq ({p \over q} - 1)T$.)
At first sight it might seem difficult to implement this test,
for although multiplication can be carried out easily starting with the
least significant digit, division is more problematic.  To handle this,
we simply rewrite the inequality $J < ({p \over q} -1)T$ as
$qJ < (p-q)T$.  Now on input $T$ we can guess $J$ digit-by-digit, transduce
$J$ into $qJ$ and $T$ into $(p-q)T$, and verify
the inequality $qJ < (p-q)T$ on the fly starting with the least
significant digit, as before.  

\section{The orbit closure}
\label{orbit}

     We now turn to orbits and the orbit closure of automatic sequences.
As motivation, recall that a certain classical dynamical system (i.e.,
a compact set together with a continuous map of this set) is associated with
any sequence, namely the topological closure of the orbit of that sequence under
the shift. For
some sequences, the lexicographically least and largest sequences in
the orbit closure are known explicitly.

     Consider, as an example, the Thue-Morse sequence $\bf t$.  
The lexicographically least sequence in the orbit closure
of $\bf t$ is the sequence obtained by iterating the Thue-Morse morphism
$\mu: {\tt 0} \rightarrow {\tt 01}, {\tt 1} \rightarrow {\tt 10}$ on
$\tt 1$, and then dropping the first letter
\cite{Allouche:1983,Allouche&Cosnard:1983,Allouche&Currie&Shallit:1998,Komornik&Loreti:1998}.
This gives
$$ {\tt 001011001101001} \cdots $$
and this sequence is clearly $2$-automatic, as it is accepted by the
DFAO in Figure~\ref{llt} below.

\begin{figure}[H]
\begin{center}
\input orb5.tex
\end{center}
\caption{Automaton generating the lexicographically least sequence in the orbit
closure of the Thue-Morse sequence}
\label{llt}
\end{figure}

Other examples are discussed in Section~\ref{appli-sec}.  Recall
that the Rudin-Shapiro sequence 
${\bf u} = (u_n)_{n \geq 0}$ is a $2$-automatic sequence defined
as follows:  $u_n$ is $0$ or $1$ according to whether the number of
(possibly overlapping) occurrences of {\tt 11} in the binary expansion
of $n$ is even or odd.   We observe empirically that the lexicographically least
sequence in the orbit closure of the Rudin-Shapiro sequence 
seems to be the sequence obtained by preceding the Rudin-Shapiro
sequence by a $0$, but we did not yet prove this.

%\begin{theorem}
%     The lexicographically least sequence in the orbit closure of $\bf u$ is
%${\tt 0} {\bf u}$, and the lexicographically greatest is ${\tt 1} \overline{\bf u}$.
%\end{theorem}

     We now apply the technique of Section~\ref{ultimate}
to the lexicographically least sequence
in the orbit closure of a $k$-automatic sequence.   
Our idea is based on the following
characterization.

\bigskip

\begin{lemma}
Let
${\bf a} = (a_n)_{n \geq 0}$ be a sequence, and let
${\bf b} = (b_n)_{n \geq 0}$ be the lexicographically least sequence in
the orbit closure of $\bf a$. Then
$b_i = c$ if and only if there exists $j \geq 0$ such that
$a_{j+i} = c$ and 
$a_l a_{l+1} \cdots a_{l+i} \geq a_j a_{j+1} \cdots a_{j+i}$
for all $l \geq 0$.  
\label{char}
\end{lemma}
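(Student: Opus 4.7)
The plan is to first establish the clean reformulation: for every $i \geq 0$, the prefix $b_0 b_1 \cdots b_i$ of $\bf b$ equals the lexicographically smallest factor of $\bf a$ of length $i+1$. Once this reformulation is in place, the lemma is essentially a restatement of it, and both directions fall out immediately.

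To prove the reformulation, I would first check that \emph{every} factor $w$ of $\bf a$ extends to at least one infinite sequence in $\cl(\orb({\bf a}))$. If $w$ occurs at positions $j_1 < j_2 < \cdots$ in $\bf a$, then: if there are only finitely many occurrences, the shift at the last one lies in $\orb({\bf a})$ itself and begins with $w$; if there are infinitely many, a standard compactness argument (a diagonal selection on the shifts $(a_{j_n + m})_{m \geq 0}$, or König's lemma applied to the factor tree) produces a limit sequence beginning with $w$ that lies in $\cl(\orb({\bf a}))$. Combined with the quoted characterization (Prop.\ 10.8.9) that a sequence lies in $\cl(\orb({\bf a}))$ exactly when each of its finite prefixes is a factor of $\bf a$, this shows that the length-$(i+1)$ prefixes of sequences in $\cl(\orb({\bf a}))$ are precisely the length-$(i+1)$ factors of $\bf a$.

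Now let $w^\star$ denote the lexicographically smallest length-$(i+1)$ factor of $\bf a$. Since ${\bf b} \in \cl(\orb({\bf a}))$, its prefix $b_0 \cdots b_i$ is a factor of $\bf a$, so $w^\star \leq b_0 \cdots b_i$ lexicographically. Conversely, by the previous paragraph some ${\bf c} \in \cl(\orb({\bf a}))$ begins with $w^\star$, and because $\bf b$ is lex-least in $\cl(\orb({\bf a}))$ we have ${\bf b} \leq {\bf c}$, which forces $b_0 \cdots b_i \leq w^\star$. Hence $b_0 \cdots b_i = w^\star$.

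Finally, I would translate the lemma's condition into this language. The condition ``there exists $j \geq 0$ with $a_{j+i} = c$ and $a_l a_{l+1} \cdots a_{l+i} \geq a_j a_{j+1} \cdots a_{j+i}$ for all $l \geq 0$'' says precisely that there is an occurrence of the lex-smallest length-$(i+1)$ factor of $\bf a$ at some position $j$, and that this factor ends in the letter $c$. Since $w^\star = b_0 \cdots b_i$ is uniquely determined as a word (only its set of occurrence positions varies), this is equivalent to $b_i = c$, which yields both directions. The only real subtlety is the compactness step in the first paragraph; everything else is bookkeeping.
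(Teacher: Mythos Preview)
Your proof is correct and rests on the same core observation as the paper's: the length-$(i{+}1)$ prefix of $\bf b$ is precisely the lexicographically smallest length-$(i{+}1)$ factor of $\bf a$. The paper, however, reaches this much more cheaply. You invoke compactness (or K\"onig's lemma) to show that every factor of $\bf a$ extends to some element of $\cl(\orb({\bf a}))$, but this step is unnecessary: the factor $a_j a_{j+1} \cdots a_{j+i}$ is already the prefix of the shift $\sigma^j {\bf a}$, which lies in $\orb({\bf a})$ itself. So to get $b_0\cdots b_i \leq w^\star$ you may simply take ${\bf c} = \sigma^j {\bf a}$ rather than producing a limit point.

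The paper's argument accordingly fits in two lines: since ${\bf b}\in\cl(\orb({\bf a}))$, its prefix $b_0\cdots b_i$ equals some $a_j\cdots a_{j+i}$; and since $\bf b$ is lex-least in the closure, ${\bf b}\leq \sigma^l{\bf a}$ for every $l$, whence $a_j\cdots a_{j+i}=b_0\cdots b_i \leq a_l\cdots a_{l+i}$. This handles both directions at once. Your version is sound, but the compactness detour can be dropped entirely.
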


\begin{proof}
Suppose $b_i = c$.  Then there exists $j \geq 0$ such that
$a_j a_{j+1} \cdots a_{j+i} = b_0 b_1 \cdots b_i$, so $a_{j+i} = b_i$.
But then 
$a_l a_{l+1} \cdots a_{l+i} \geq a_j a_{j+1} \cdots a_{j+i}$
for all $l \geq 0$.   (Here we use $\geq$ for lexicographic order.)

On the other hand, if $a_l a_{l+1} \cdots a_{l+i} \geq a_j a_{j+1}
\cdots a_{j+i}$ for all $l \geq 0$, then $a_j a_{j+1}
\cdots a_{j+i}$ must be the prefix of $\bf b$ of length $i+1$, and
so $b_i = a_{j+i} = c$.
\endpf
\end{proof}

The advantage to this characterization of $b_i$ is that it does not
require explicit knowledge of $b_0, b_1, \ldots, b_{i-1}$.

\bigskip

\begin{theorem}
Let $\bf a$ be $k$-automatic, and
let ${\bf b}$ be the lexicographically least sequence in the orbit closure
of ${\bf a}$.   Then $\bf b$ is $k$-automatic.
\label{main}
\end{theorem}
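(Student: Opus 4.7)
The plan is to apply Lehr's technique, exactly as in the proofs of Theorems~\ref{periodic} and~\ref{overl}, guided by Lemma~\ref{char}. That lemma says $b_i=c$ iff some $j$ satisfies $a_{j+i}=c$ together with the lex-minimality (among factors of $\mathbf{a}$) of $a_j a_{j+1}\cdots a_{j+i}$. Naming the first position of difference, lex-minimality fails iff there exist $(l,m)$ with $m\le i$, $a_{l+m}<a_{j+m}$, and $a_{l+t}=a_{j+t}$ for every $t<m$. Overall,
\[
  b_i = c \iff \exists j\,\bigl[\,a_{j+i}=c \,\wedge\, \neg\exists(l,m)\bigl(m\le i \wedge a_{l+m}<a_{j+m} \wedge \neg\exists t<m:\ a_{l+t}\ne a_{j+t}\bigr)\bigr].
\]
The goal is to realize, for each $c\in\Delta$, a DFA $D_c$ over $\Sigma_k$ accepting $\{i:b_i=c\}$, then take the direct product of the $D_c$'s and output-label each product state by the unique $c$ for which it is accepting, yielding the desired DFAO for $\mathbf{b}$.

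Concretely I would build $D_c$ in three Lehr-style stages, reading several indices in parallel as aligned streams of base-$k$ digits (least significant first) with the leading-zero padding trick of Theorem~\ref{periodic} built into each stage's accept set. Stage~1: on input $(j,l,m)$, guess a position $t$ digit-by-digit, use an order-flag as in Eq.~(\ref{update}) to enforce $t<m$, track carries to compute $j+t$ and $l+t$, run two copies of the given DFAO $M$ on these two sums, and accept when the two $M$-outputs differ. Determinize and complement to obtain a DFA $D_\pi$ accepting $(j,l,m)$ iff $a_{j+t}=a_{l+t}$ for every $t<m$. Stage~2: build a Lehr-style NFA on $(i,j,l,m)$ that checks $m\le i$ (order-flag) and $a_{l+m}<a_{j+m}$ (carries plus two simulated copies of $M$), intersect with $D_\pi$ cylindrified to include the $i$-coordinate, and project out the $l$- and $m$-coordinates by nondeterministic alphabet projection, getting an NFA on $(i,j)$ that accepts iff some $l$ yields a strictly lex-smaller length-$(i+1)$ factor than the one at $j$. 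Determinize and complement to get a DFA $D_4$ accepting $(i,j)$ iff the factor at $j$ of length $i+1$ is lex-minimal. Stage~3: for each $c\in\Delta$ build an NFA on $(i,j)$ that runs $D_4$ while in parallel computing $j+i$ with a carry and simulating $M$, accepting iff $D_4$ accepts and the $M$-output on $j+i$ equals $c$; project out $j$ and determinize to obtain $D_c$.

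The main obstacle is the quantifier structure $\exists j\,\neg\exists(l,m)\,\neg\exists t$, which forces two nested subset-construction-and-complement steps; because of this nesting one must be careful at each stage to redefine the accept set to absorb arbitrary $(0,0,\ldots)^*$ suffixes, exactly as in Theorem~\ref{periodic}, so that index tuples of different digit-lengths are handled correctly before determinization. Everything else (carry-tracking arithmetic on indices, order-flag comparison via the update map, simulation of $M$ on an arithmetically computed input, and the final product-with-output-labels step) is a routine adaptation of gadgets already used in Theorems~\ref{periodic} and~\ref{overl}; the content of the proof is that these pieces compose cleanly through the two nested complementations to produce the claimed $k$-automatic sequence $\mathbf{b}$.
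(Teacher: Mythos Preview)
Your proposal is correct and follows essentially the same approach as the paper: invoke Lemma~\ref{char}, then realize its first-order characterization of $b_i=c$ via a sequence of Lehr-style NFAs with nested guess/determinize/complement steps, finishing with a product DFAO over all $c\in\Delta$. The only cosmetic difference is that you separate out the position $m$ of first difference as an explicitly guessed variable (projected out alongside $l$), whereas the paper folds this into a single disjunctive acceptance condition on the triple $(L,J,R)$; the two-complementation architecture and the leading-zero handling are identical.
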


\begin{proof}
The idea is to use the condition in Lemma~\ref{char}.
The proof is similar to the proof of Theorem~\ref{periodic}, and we outline
it below.  The fine details about how everything is computed are similar
to those of Theorem~\ref{periodic} and we omit them.

The proof consists of several steps.  First, suppose we have a
$k$-DFAO $M$ generating $\bf a$.  We now create an NFA $M_1$ that
on input $(L, J, R)$ accepts if and only if
there exists $t$, $0 \leq t < R$,
such that $a_{L+t} \not= a_{J+t}$, or $a_{L+R} \geq a_{J+R}$.  
The idea is to ``guess'' $t$ bit-by-bit, verify the inequality
$0 \leq t < R$, while simultaneously computing
the quantities $L+t$, $J+t$, $L+R$, and $J+R$.  We accept if
$a_{L+t} \not= a_{J+t}$ for some $t$, $0 \leq t < R$, or if
$a_{L+R} \geq a_{J+R}$.

From $M_1$ we create a DFA $M_2$ that on input $(L,J,R)$ accepts
if and only if $a_{L+t} = a_{J+t}$ for all $t$, $0 \leq t < R$ and
$a_{L+R} < a_{J+R}$.  This is done by converting $M_1$ to a DFA using
the subset construction and changing all accepting states to non-accepting
and vice versa.  Thus $M_2$ accepts $(L,J,R)$ if and only if
$a_L a_{L+1} \cdots a_{L+R} < a_J a_{J+1} \cdots a_{J+R}$.  

Next, from $M_2$ we create an NFA $M_3$ that on input $(J,R)$ accepts
if and only if there exists an $L \geq 0$ such that
$a_L a_{L+1} \cdots a_{L+R} < a_J a_{J+1} \cdots a_{J+R}$.  The idea
is to ``guess'' $L$ bit-by-bit and call $M_2$ on $(L,J,R)$.  A priori
$L$ could be very big compared to $J$ and $R$, but our previous trick to
handle this works.  

Then from $M_3$ we create a DFA $M_4$ that on input $(J,R)$ accepts if and
only if for all $L \geq 0$ we have
$a_L a_{L+1} \cdots a_{L+R} \geq a_J a_{J+1} \cdots a_{J+R}$.  This
is done by converting $M_3$ to a DFA using the subset construction, and
then changing all accepting states to non-accepting
and vice versa. 

From $M_4$ we create an NFA $M_5$ that on input $cI$ (i.e., the character
$c$ concatenated with the base-$k$ expansion of $I$) accepts if and only
if there exists $J \geq 0$ with $a_{J+I} = c$ and
$a_L a_{L+1} \cdots a_{L+I} \geq a_J a_{J+1} \cdots a_{J+I}$ for
all $L \geq 0$.  This is done by recording $c$ in the state,
``guessing'' $J$ bit-by-bit, computing $J+I$ bit-by-bit and simulating
$M$ on $J+I$, and
calling $M_4$ with input $(J,I)$.   We then convert $M_5$ to a
DFA $M_6$ using the subset construction.

Finally, we create a $k$-DFAO $M_7$ that on input $I$ simulates
$M_6$ on input $cI$ in parallel for each $c \in \Delta$.  Exactly one
branch will accept, and the output associated with this branch is $c$.
\endpf
\end{proof}

\section{Continued fraction expansions}

     The results of the previous section
can be generalized to other kinds of orders.
Instead of the ordinary lexicographic order, we could consider an order
that depends on the index of the string being compared.  One way to do
this is to consider a sequence of permutations $(\psi_i)_{i \geq 0}$, where
each $\psi_i: \Delta \rightarrow \Delta$, and
when comparing $a_0 a_1 \cdots a_{i-1}$ to $b_0 b_1 \cdots b_{i-1}$,
we instead compare 
$\psi_0 (a_0) \cdots \psi_{i-1} (a_{i-1})$ to
$\psi_0 (b_0) \cdots \psi_{i-1} (b_{i-1})$ (using the ordinary lexicographic
order).  
An example of 
this kind of ordering comes from continued fractions, where 
$[a_0, a_1, a_2, \ldots ] < [b_0, b_1, b_2, \ldots]$ if and only
if $a_0 < b_0$, or $a_0 = b_0$ and $a_1 > b_1$, or
$a_0 = b_0$, $a_1 = b_1$, and $a_2 < b_2$, etc.  This corresponds
to inverting the order of the elements being compared on the odd indexes.
Provided the sequence $(\psi_i)_{i \geq 0}$ is $k$-automatic, 
the result of Theorem~\ref{main} still holds.

\bigskip

\begin{corollary}
      Let $(\psi_i)_{i \geq 0}$ be a $k$-automatic sequence of
permutations, and let $(a_i)_{i \geq 0}$ be a $k$-automatic sequence.
Then the lexicographically least sequence in the orbit closure,
as modified by the permutations $(\psi_i)$, is $k$-automatic.
\label{perm}
\end{corollary}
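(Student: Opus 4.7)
The plan is to adapt the construction from Theorem~\ref{main} so that the character-order test inside the NFA $M_1$ uses the permutation-modified comparison rather than the plain one. Since $(\psi_i)_{i \geq 0}$ is $k$-automatic, there is a DFAO $P$ that, on input the base-$k$ representation of $i$, outputs the permutation $\psi_i \in S_\Delta$; I will run $P$ inside the states of $M_1$ as an auxiliary oracle.

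The crucial observation is that each $\psi_t$ is a bijection on $\Delta$, so $\psi_t(a_{L+t}) = \psi_t(a_{J+t})$ iff $a_{L+t} = a_{J+t}$. Therefore the first position at which two shifts $a_L a_{L+1}\cdots$ and $a_J a_{J+1}\cdots$ of ${\bf a}$ differ is the same under the plain and the permuted orders; only the verdict at that position changes. Writing ${\bf b}$ for the sequence we wish to construct, the permuted analog of Lemma~\ref{char} reads: $b_i = c$ iff there exists $j \geq 0$ such that $a_{j+i} = c$ and, for every $l \geq 0$,
\[
\psi_0(a_l)\psi_1(a_{l+1})\cdots\psi_i(a_{l+i}) \;\geq\; \psi_0(a_j)\psi_1(a_{j+1})\cdots\psi_i(a_{j+i})
\]
in the ordinary lexicographic order on $\Delta^{i+1}$. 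The outer shape of each automaton in the cascade $M_1,\ldots,M_7$ of Theorem~\ref{main}, which guesses a position and tests a local condition, therefore carries over unchanged.

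The single substantive modification is inside $M_1$: the condition ``$a_{L+R} \geq a_{J+R}$'' is replaced by ``$\psi_R(a_{L+R}) \geq \psi_R(a_{J+R})$''. I implement this by augmenting every state of $M_1$ with the current state of $P$, advancing $P$ on the $R$-digit of each input letter in $\Sigma_k \times \Sigma_k \times \Sigma_k$. When the input has been consumed, the state of $P$ encodes $\psi_R$, which is then applied to the characters produced by the simulations of $M$ on $L+R$ and $J+R$ in order to perform the order test. Every subsequent automaton in the cascade ($M_2, M_3, M_4, M_5, M_6, M_7$) is built from its predecessor in precisely the same way as in Theorem~\ref{main}, so the resulting $M_7$ is a $k$-DFAO generating ${\bf b}$.

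The main technical wrinkle is again the length mismatch between the guessed quantities (the $L$ of $M_3$, the $J$ of $M_5$) and the declared inputs. This is handled by the same padding trick used in the proofs of Theorem~\ref{periodic} and Theorem~\ref{main}: whenever an NFA is formed, the set of final states is enlarged to include every state from which some true final state is reachable along a path labelled $(\mathtt{0},\ldots,\mathtt{0})^j$ for $j \geq 0$. This lets $P$, $M$, and the arithmetic bookkeeping finish consuming the leading zeros that may be needed when a guessed quantity is longer than the corresponding input, ensuring that $\psi_R$ and the relevant symbols of ${\bf a}$ are read out from the correct states.
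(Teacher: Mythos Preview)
Your proof is correct and follows exactly the paper's approach: the only change to the construction of Theorem~\ref{main} is to replace the test $a_{L+R} \geq a_{J+R}$ in $M_1$ by $\psi_R(a_{L+R}) \geq \psi_R(a_{J+R})$, which is computable since $(\psi_i)_{i\geq 0}$ is $k$-automatic. Your additional remarks (that bijectivity of $\psi_t$ leaves the first-difference position unchanged, and the explicit handling of the padding trick) are helpful elaborations but do not depart from the paper's argument.
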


\begin{proof}
      In the construction of Theorem~\ref{main}, when we compare
$a_{L+R}$ to $a_{J+R}$, we instead compare
$\psi_R (a_{L+R})$ to $\psi_R(a_{J+R})$.  Since $(\psi_i)_{i \geq 0}$
is $k$-automatic, there is no problem computing $\psi_R$ on input $R$.
\endpf
\end{proof}

     From now on, when we talk about a continued fraction expansion $[a_0, a_1,
\ldots]$ being $k$-automatic, we mean the continued fraction has bounded
partial quotients and the underlying
sequence of partial quotients $(a_i)_{i \geq 0}$
is $k$-automatic.

     Let $T(x)$ be the usual transformation on continued fractions
defined by $T(x) = {1 \over {x - \lfloor x \rfloor}}$, so that
$T([a_0, a_1, a_2, \ldots ] ) = [a_1, a_2, \ldots ]$.  Thus we have

\bigskip

\begin{theorem}
      Let $x$ be an irrational real number with a
$k$-automatic continued fraction
expansion $[a_0, a_1, \ldots]$.
Then the continued fraction expansions of both
$\liminf_{n \rightarrow \infty} T^n (x)$
and $\limsup_{n \rightarrow \infty} T^n (x)$ are 
$k$-automatic.
\end{theorem}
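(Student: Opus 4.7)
The plan is to identify $\liminf_{n\to\infty} T^n(x)$ and $\limsup_{n\to\infty} T^n(x)$ with the CF-minimum and CF-maximum elements of $\cl(\orb(\mathbf{a}))$, where $\mathbf{a} = (a_i)_{i\geq 0}$, and then invoke Corollary~\ref{perm} with a $2$-periodic sequence of permutations that realizes the continued-fraction ordering as a modified lexicographic order.

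First I would note that $T^n(x) = [a_n, a_{n+1}, a_{n+2}, \ldots]$. Because the CF of $x$ is by hypothesis $k$-automatic, the partial quotients are bounded, so $\mathbf{a}$ takes values in a finite alphabet $\Delta$. The continued-fraction map $\Delta^\omega \to \mathbb{R}$ sending $(b_0, b_1, \ldots)$ to $[b_0, b_1, \ldots]$ is continuous, and its image lies entirely in the irrationals (since infinite CFs are irrational), so it is injective; by compactness of $\Delta^\omega$ it is a homeomorphism onto its image. Under this identification, the subsequential limits of $(T^n(x))$ in $\mathbb{R}$ correspond to the subsequential limits of the shifts $(a_{n+i})_{i \geq 0}$ in $\Delta^\omega$, which are precisely the elements of $\cl(\orb(\mathbf{a}))$. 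Hence $\liminf_n T^n(x)$ and $\limsup_n T^n(x)$ are the real values of the CF-smallest and CF-largest elements of $\cl(\orb(\mathbf{a}))$, and the minimum and maximum are attained by compactness.

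Next I would encode the CF ordering as a permuted lexicographic order. A short computation shows that $[b_0, b_1, \ldots] < [c_0, c_1, \ldots]$ iff, at the first index $i$ where the sequences differ, $b_i < c_i$ for $i$ even and $b_i > c_i$ for $i$ odd. Let $\sigma$ be the order-reversing permutation of $\Delta$ and set $\psi_i = \mathrm{id}_\Delta$ for $i$ even, $\psi_i = \sigma$ for $i$ odd; then the CF order on $\Delta^\omega$ is exactly the modified lex order determined by $(\psi_i)$. Since $(\psi_i)_{i\geq 0}$ is periodic of period $2$, it is $k$-automatic for every $k \geq 2$, and Corollary~\ref{perm} immediately yields that the partial quotients of $\liminf_n T^n(x)$ form a $k$-automatic sequence. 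For $\limsup$, I would re-apply the corollary with $\psi'_i := \sigma \circ \psi_i$; then the lex-least sequence under $(\psi'_i)$ is the CF-greatest sequence under $(\psi_i)$, and $(\psi'_i)$ is again $2$-periodic and hence $k$-automatic.

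The step most deserving of care is the topological identification of the subsequential limits of $(T^n(x))$ in $\mathbb{R}$ with shift limits in $\Delta^\omega$; the only potential pitfall is accumulation at a rational, whose CF expansion is ambiguous, but this cannot occur because the CF map from $\Delta^\omega$ has image in the irrationals. Once that identification is in place, the rest of the proof is a routine translation of orderings followed by two invocations of Corollary~\ref{perm}.
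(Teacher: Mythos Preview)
Your proposal is correct and follows essentially the same route as the paper: both reduce the problem to an application of Corollary~\ref{perm} with the $2$-periodic sequence of permutations that inverts the order on odd indices, thereby realizing the continued-fraction comparison as a modified lexicographic order. You supply considerably more detail than the paper's one-line proof---the homeomorphism between $\Delta^\omega$ and its image, and the explicit treatment of $\limsup$ by composing with the order-reversing permutation---but the underlying argument is identical.
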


\begin{proof}
     Use Corollary~\ref{perm}, where the permutations invert the order
of the letters on every other index.
\endpf
\end{proof}

     In addition to the orbit closure of a sequence, we can study a related
structure, which we call the {\it reverse orbit closure}.  
We say that a sequence ${\bf b} = (b_n)_{n \geq 0}$ is in the
reverse orbit closure of ${\bf a} = (a_n)_{n \geq 0}$ if every finite
prefix of $\bf b$ is a prefix of some word of the form
$a_r a_{r-1} a_{r-2} \cdots a_1 a_0$.  

\bigskip

\begin{theorem}
If ${\bf a} = (a_n)_{n \geq 0}$ is $k$-automatic, then so is the 
lexicographically least sequence in the reverse orbit closure.
\end{theorem}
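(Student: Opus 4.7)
The plan is to follow the cascade structure of Theorem~\ref{main}, with the forward lex comparison replaced by one that reads the compared windows from high index to low. Set $W_i(R) := a_{R+i} a_{R+i-1} \cdots a_R$. The analogue of Lemma~\ref{char} is: $b_i = c$ if and only if there exists $R \geq 0$ with $a_R = c$, with $W_i(R)$ \emph{extendable} (i.e., a prefix of some sequence in the reverse orbit closure, equivalently, the factor $a_R a_{R+1} \cdots a_{R+i}$ occurs infinitely often in $\bf a$), and with $W_i(R) \leq W_i(R')$ for every extendable $R'$. The extendability clause is genuinely needed: for $\bf a = {\tt 0111}\cdots$, the only element of the reverse orbit closure is ${\tt 111}\cdots$, while the naive lex minimum over all $W_i(R)$ would end in ${\tt 0}$.

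The principal new ingredient is a DFA for the reverse lex comparison $W_i(R) > W_i(R')$, which holds iff there exists $s \in [0, i]$ with $a_{R+s} > a_{R'+s}$ and $a_{R+s'} = a_{R'+s'}$ for every $s' \in (s, i]$. The inner universal over $s'$ is handled by first constructing an NFA that, on input $(R, R', i, s)$, existentially guesses a witness $s' \in (s, i]$ of disagreement: the digits of $s'$ are guessed in parallel with the input, the comparisons $s < s' \leq i$ are tracked by the flag trick of Theorem~\ref{periodic}, and two copies of $M$ are simulated on the expansions of $R+s'$ and $R'+s'$ computed on the fly with carries. Converting this NFA to a DFA by the subset construction and complementing yields the desired universal. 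Intersecting with an NFA testing $a_{R+s} > a_{R'+s}$ and projecting out $s$ completes the reverse-lex automaton. The extendability predicate for $(R,i)$ is built by an analogous cascade: first a DFA accepting $(R,i,P)$ iff $a_{P+j} = a_{R+j}$ for all $j \leq i$ (via one universal-over-$j$ layer), then a ``$\forall M\,\exists P \geq M$'' layer to encode infinite occurrence.

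The remaining steps duplicate Theorem~\ref{main}: conjoin the reverse-lex and extendability DFAs to obtain a DFA accepting $(R,i)$ iff $R$ is an extendable minimizer; conjoin with the condition $a_R = c$; existentially quantify out $R$ to obtain an NFA on input $(c,i)$; finally build a DFAO on input $i$ that runs this NFA in parallel for each $c \in \Delta$, outputting the unique accepting $c$ as $b_i$. The main obstacle is the extendability check: it introduces a layer of ``infinitely often'' quantification absent from Theorem~\ref{main}, where extendability was trivially supplied by the shifts of $\bf a$. This extra layer fits cleanly into the cascade established in the previous sections, and once it is in place the rest of the construction is a routine adaptation of the earlier proofs.
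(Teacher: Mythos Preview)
Your cascade is correct and mirrors the paper's approach, but you have gone further than the paper in one essential respect. The paper's proof rests on the bare characterization
\[
b_i = c \iff \exists\, r \geq i:\ a_{r-i} = c \ \text{and}\ \forall s \geq i,\ a_s a_{s-1}\cdots a_{s-i} \geq a_r a_{r-1}\cdots a_{r-i},
\]
with no extendability hypothesis at all. Your example ${\bf a} = {\tt 0111}\cdots$ shows this is false as stated: the reverse orbit closure is the singleton $\{{\tt 1}^\omega\}$, yet the paper's condition yields $b_0 = {\tt 0}$. The asymmetry with Lemma~\ref{char} is exactly the one you point out: in the forward orbit closure every occurring window extends to a shift of $\bf a$, whereas the reversed windows $a_r a_{r-1}\cdots a_0$ terminate at index $0$, so minimality must be taken only over windows that recur infinitely often. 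Your extra $\forall M\,\exists P\geq M$ layer implements precisely this, and the remaining steps (the reverse-lex comparison via a guessed last-disagreement index, the projection over $R$, and the parallel-over-$c$ DFAO) are the direct analogues of the paper's construction. In short, your argument is a correct proof of the theorem and simultaneously a repair of the characterization the paper relies on.
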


\begin{proof}
Let $b = (b_n)_{n \geq 0}$ be the lexicographically 
least sequence
in the reverse orbit closure of ${\bf a} = (a_n)_{n \geq 0}$.  We use
the following characterization of $\bf b$:
$b_i = c$ if and only if
there exists  $r \geq i$ 
such that  $a_{r-i} = c$ and
$a_s a_{s-1} \cdots a_{s-i} \geq a_r a_{r-1} \cdots a_{r-i} $
 for all $ s \geq i $.

We can now implement this test in exactly the same way that we
implemented the test in the proof of Theorem~\ref{main}.
\endpf
\end{proof}

     We can also combine the reverse orbit closure with
a permutation that inverts the order of the letters on every other
index.  

\begin{theorem}
Let $\alpha$ be an irrational
real number with a $k$-automatic continued fraction expansion
$[a_0, a_1, a_2, \ldots]$.
Let $p_n/q_n$ be the $n$'th convergent to the
continued fraction to $\alpha$.  Let $\beta = \liminf_{n \rightarrow \infty} p_n/p_{n-1}$
and $\gamma = \liminf_{n \rightarrow \infty} q_n/q_{n-1}$, $\delta = \limsup_{n \rightarrow \infty} p_n/p_{n-1}$,
$\zeta = \limsup_{n \rightarrow \infty} q_n/q_{n-1}$.  
Then the continued fraction expansion of each of
$\beta, \gamma, \delta, \zeta$ is $k$-automatic.
\end{theorem}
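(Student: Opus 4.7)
The plan is to express each of the four ratios $p_n/p_{n-1}$ and $q_n/q_{n-1}$ as a finite continued fraction built from a reversed prefix of $(a_i)$, identify the $\liminf$'s and $\limsup$'s with extrema of continued-fraction values taken over the reverse orbit closure of $(a_i)$, and finally invoke the combination of Corollary~\ref{perm} with the reverse-orbit-closure theorem proved just above.

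First I would verify, by a short induction from the convergent recurrences $p_n = a_n p_{n-1} + p_{n-2}$ and $q_n = a_n q_{n-1} + q_{n-2}$, the identities
\[
  \frac{p_n}{p_{n-1}} = [a_n; a_{n-1}, \ldots, a_1, a_0]
  \quad\text{and}\quad
  \frac{q_n}{q_{n-1}} = [a_n; a_{n-1}, \ldots, a_2, a_1],
\]
so both ratios are continued fractions whose partial-quotient strings are reversed prefixes of $(a_i)$. Since $(a_i)$ is $k$-automatic it takes values in a finite alphabet, so a compactness argument applies: along any subsequence $n_k$ for which $p_{n_k}/p_{n_k-1}$ converges, one can refine further so that $a_{n_k}, a_{n_k-1}, a_{n_k-2}, \ldots$ stabilizes term-by-term on every initial segment. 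The resulting sequence $(b_i)_{i\geq 0}$ lies in the reverse orbit closure $R$ of $(a_i)$, and by continuity of continued-fraction evaluation on bounded-partial-quotient sequences the limit is $[b_0; b_1, b_2, \ldots]$. Conversely, every element of $R$ arises as such a limit, and the analogous correspondence, shifted by one index, holds for $q_n/q_{n-1}$.

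Consequently $\beta = \inf\{[b_0; b_1, b_2, \ldots] : (b_i) \in R\}$ and $\delta = \sup$ of the same set, with analogous expressions for $\gamma, \zeta$ on the $q$-side. The continued-fraction order on infinite sequences of partial quotients is exactly the lexicographic order after applying, at every odd-indexed coordinate, the permutation that reverses the alphabet order; hence the infimum on the real side is realized by the sequence in $R$ that is lex-least under this permuted order, and the supremum by the lex-greatest. Combining Corollary~\ref{perm} with the reverse-orbit-closure theorem immediately preceding the present statement, these extremal sequences in $R$ are $k$-automatic, and therefore so are the continued-fraction expansions of $\beta, \gamma, \delta, \zeta$. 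The main obstacle is the monotonicity bookkeeping: one must confirm that the map $(b_i) \mapsto [b_0; b_1, b_2, \ldots]$ is genuinely monotone with respect to the CF-permuted lex order on $R$, so that ``real infimum of CF-values'' and ``lex-least element of $R$ under that order'' really do coincide; once this is in hand, the theorem follows by direct invocation of the earlier machinery.
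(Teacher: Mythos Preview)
Your proposal is correct and follows essentially the same route as the paper: both use the identity $p_n/p_{n-1} = [a_n, a_{n-1}, \ldots, a_0]$ (which the paper attributes to Galois rather than re-deriving), identify the desired $\liminf/\limsup$ with the extremal sequence in the reverse orbit closure under the alternating continued-fraction order, and then invoke the reverse-orbit-closure theorem together with the permuted-order mechanism of Corollary~\ref{perm}. Your write-up is more explicit about the compactness and monotonicity steps that the paper leaves implicit, but the underlying argument is the same.
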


\begin{proof}
We prove the result for $\beta$, the others being similar.
By a famous result of Galois \cite{Galois:1828} we have
$${{p_n} \over {p_{n-1}}} = [a_n, a_{n-1}, \ldots, a_0 ].$$
Now $\beta$ corresponds to the lexicographically least sequence in
the reverse orbit closure of $(a_i)_{i \geq 0}$,
except that the ordering is slightly
different from the usual ordering, where the ordering is as usual on the
even indexed terms and opposite on the odd-indexed terms.  As in
Corollary~\ref{perm}, we can handle this in the same
way.
\endpf
\end{proof}

\noindent{\bf Example.}
     Let us consider an example.  As is well-known \cite{Shallit:1979,vanderPoorten&Shallit:1992}, for integers $k \geq 3$ the real number
$$\alpha_k = \sum_{i \geq 0} k^{-2^i} = [0,k-1,k+2,k,k,k-2,k,k+2,k,k-2,k+2,k,k-1,\ldots]$$
has a $2$-automatic continued fraction expansion, generated by the automaton
given in Figure~\ref{alphak} (again, the automaton
expects the least significant digit first).

\begin{figure}[H]
\begin{center}
\input orb8.tex
\end{center}
\caption{Automaton generating the continued fraction for $\alpha_k$}
\label{alphak}
\end{figure}

\begin{figure}[H]
\begin{center}
\input orb9.tex
\end{center}
\caption{Automaton generating the continued fraction for $\zeta_k$}
\label{betak}
\end{figure}

Then $\zeta_k = \limsup_{n \geq 0} q_n/q_{n-1} = [k+2, k-2, k , k+2,
k, k-2, k, k, \ldots ]$ is $2$-automatic.   \endpf

\bigskip

Let $\alpha$ be an irrational number with partial quotients $p_n/q_n$.
The quantity $\zeta = \limsup_{n \geq 0} q_n/q_{n-1}$ figures in a number of
recent papers in combinatorics on words.  For example, $2+\zeta$ is the
value of the recurrence quotient of a Sturmian word with slope
$\alpha$ \cite{Cassaigne:1999,Adamczewski&Allouche:2007}.  Hence
this recurrence quotient has a $k$-automatic continued fraction if
$\alpha$ does.

The number $\zeta$ also appears (actually, $\zeta+1$) as
the irrationality measure of numbers of the form
$(b-1) \sum_{n \geq 1} b^{-\lfloor n \alpha \rfloor}$ 
\cite{Adamczewski&Allouche:2007}.

Finally, $\zeta$  also appears in a formula giving the critical exponent (aka
``index'') of
Sturmian words, as found by 
Damanik and Lenz \cite[Thm.\ 1, p.\ 24]{Damanik&Lenz:2002}
and Cao and Wen \cite[Thm.\ 9, p.\ 380]{Cao&Wen:2003}.  This exponent
is essentially 
$$\zeta' :=  2 + \limsup_{n \geq 1} {{q_n - 2} \over { q_{n-1}}}.$$
If the $\limsup$ is actually attained for a particular $n$, then the
critical exponent is rational.  Otherwise it clearly coincides
with $2 + \zeta$, and its continued fraction expansion is
$k$-automatic if that of $\alpha$ is.

\section{Applications}
\label{appli-sec}

Our results about the lexicographically least and largest sequences 
in the orbit closure of a sequence can be illustrated by and applied to
two families of binary sequences: the sequences in the set $\Gamma$
described below and the Sturmian sequences.
 
\subsection{Sequences in the set $\Gamma$}

Theorem~\ref{main} can be applied to shed some light on the
automatic sequences that belong to two sets of binary sequences:
the set $\Gamma$ occurring in the study of iterations of continuous 
unimodal maps of the interval (see \cite{Allouche&Cosnard:1983,Allouche:1983})
and the set $\Gamma_{\rm strict}$ occurring in the study of unique
$\beta$-expansions of the number $1$
\cite{Erdos&Joo&Komornik:1990,Komornik&Loreti:1998,Allouche&Cosnard:2001},
where
$$
\begin{array}{lll}
\Gamma &:=& \{A \in \{ {\tt 0, 1}\}^\omega:  \ \forall k \geq 0, \
\overline{A} \leq \sigma^k A \leq A\} \\
& & \\
\Gamma_{\rm strict} &:=& \{A \in \{{\tt 0, 1}\}^\omega: \ \forall k \geq 1,
\ \overline{A} < \sigma^k A < A\}. \\
\end{array}
$$
Here $A = (a_n)_{n \geq 0}$, and
$\sigma$ is the shift on sequences defined by  $\sigma A :=
(a_{n+1})_{n \geq 0}$. The bar operation replaces $0$'s by $1$'s
and $1$'s by $0$'s, i.e., $\overline{A} := (1-a_n)_{n \geq 0}$.
Note that these two sets differ only by a set of (purely) periodic sequences.
Also note that the set $\Gamma$ above differs slightly from the set $\Gamma$
in \cite{Allouche:1983}, in that
the set $\Gamma$ above contains the extra sequence 
$({\tt 10})^{\omega}$.

The shifted Thue-Morse sequence is an element of $\Gamma$, as are more 
general automatic sequences (e.g., analogues of the Thue-Morse sequence 
including the $q$-mirror sequences introduced in \cite{Allouche&Cosnard:1983,Allouche:1983}; see
\cite{Komornik&Loreti:2002,Komornik&Loreti:2007,deVries&Komornik:2007,Niu&Wen:2006,Allouche&Frougny:2007}).

Now for any binary sequence $A$ belonging to $\Gamma$, define, as in
\cite{Allouche&Cosnard:1983,Allouche:1983},
$$
\Gamma_A := \{B \in \{{\tt 0, 1}\}^\omega: \ \forall k \geq 0, \ 
\overline{A} \leq \sigma^k B \leq A \}. 
$$
Of course, the sequence $A$ belongs to $\Gamma_A$. Furthermore ${\tt 1}^{\omega}$ 
belongs to $\Gamma$, and any binary sequence $B$ belongs to 
$\Gamma_{{\tt 1}^{\omega}}$. Thus, given $B$, it is interesting to look for the 
lexicographically
least sequence $A$ such that $B$ belongs to $\Gamma_A$. The answer is easy 
(see \cite[pp.~37--38]{Allouche:1983}):
the least sequence $A$ in $\Gamma$ such that $B$ belongs to $\Gamma_A$
is 
$$
\Theta(B) := 
\sup(\{\sigma^k B: \ k \geq 0\} \cup
\{\sigma^{\ell} \overline{B}: \ \ell \geq 0\}).
$$
In particular for any sequence $B$, all sequences $\sigma^k B$ and
$\sigma^{\ell} \overline{B}$ belong to $\Gamma_{\Theta(B)}$, and $\Theta(B)$
is the largest such sequence.

Theorem~\ref{main} above shows that if $B$ is automatic, then so is
$\Theta(B)$. This remark is a small step in the study of {\em all\,} automatic 
sequences belonging to $\Gamma$. Note that $\Gamma$ is not countable (see 
e.g., \cite[Prop.~3, p.~35]{Allouche:1983}), so that $\Gamma$ also
contains sequences 
that are {\em not\,} automatic. Even more, $\Gamma$ contains sequences whose
subword complexity is not $O(n)$: it suffices to take the sequence $\Theta(B)$,
where $B$ is, as in \cite{Grillenberger:1973}, a binary minimal sequence with positive 
topological entropy, hence with subword complexity not of the form $O(n)$.

\subsection{Sturmian sequences}

We suppose that the reader is familiar with the notion of Sturmian
sequence (see, e.g., \cite[Chapter~2]{Lothaire:2002}).
A result on characteristic 
Sturmian sequences and Sturmian sequences that was proved or partly 
proved several times (see the survey \cite{Allouche&Glen:2008}) states that

\bigskip

\begin{theorem}\label{extr}

\ { }

\begin{itemize}

\item[(a)] 
A nonperiodic sequence $A$ is characteristic Sturmian if 
and only if for any $k \geq 0$ the following inequalities hold
$$
{\tt 0} \, A \leq \sigma^k A \leq {\tt 1} \, A.
$$

\item[(b)]
A nonperiodic binary sequence $A$ is Sturmian if and only if there exists 
a binary sequence $B$ such that for any $k \geq 0$ the following inequalities 
hold
$$
{\tt 0} \, B \leq \sigma^k A \leq {\tt 1} \, B.
$$
Furthermore such a $B$ is unique, and is the characteristic Sturmian sequence
having the same slope as $A$.

\end{itemize}

\end{theorem}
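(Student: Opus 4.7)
The plan is to reduce everything to the standard structure of the Sturmian subshift. Recall that for irrational $\alpha \in (0,1)$, the subshift $\Omega_\alpha$ consisting of all Sturmian sequences of slope $\alpha$ coincides with the shift-orbit closure of any single Sturmian sequence of that slope, and its elements are exactly the mechanical sequences $s_{\alpha,\rho}(n) = \lfloor (n+1)\alpha + \rho\rfloor - \lfloor n\alpha + \rho\rfloor$ for $\rho \in [0,1)$ together with the two ``singular'' sequences ${\tt 0}c_\alpha$ and ${\tt 1}c_\alpha$, where $c_\alpha$ denotes the characteristic sequence of slope $\alpha$. A direct computation from the floor-function definition shows that ${\tt 0}c_\alpha$ and ${\tt 1}c_\alpha$ are the lexicographically smallest and largest elements of $\Omega_\alpha$. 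The forward implication in (a) is then immediate: if $A = c_\alpha$, every $\sigma^k A$ lies in $\Omega_\alpha$ and is therefore squeezed between ${\tt 0}A$ and ${\tt 1}A$.

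For the converse of (a), suppose $A$ is nonperiodic and satisfies ${\tt 0}\,A \leq \sigma^k A \leq {\tt 1}\,A$ for every $k \geq 0$. I would first show $A$ is balanced. If it were not, the standard ``bispecial palindrome'' lemma yields a palindrome $w$ such that both ${\tt 0}w{\tt 0}$ and ${\tt 1}w{\tt 1}$ occur as factors of $A$; choosing positions $k_0$ and $k_1$ at which these factors occur, and applying the hypotheses $\sigma^{k_0}A \leq {\tt 1}\,A$ and ${\tt 0}\,A \leq \sigma^{k_1}A$, forces $A$ to begin simultaneously with $w{\tt 0}$ and with $w{\tt 1}$, a contradiction. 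Morse--Hedlund then gives that $A$ is Sturmian of some irrational slope $\alpha$, and the first step identifies ${\tt 0}c_\alpha$ and ${\tt 1}c_\alpha$ as the lex-extremes of $\Omega_\alpha$, forcing $A = c_\alpha$.

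Part (b) follows by bootstrap from (a). For necessity, if $A$ is Sturmian of slope $\alpha$, set $B = c_\alpha$; every $\sigma^k A$ lies in $\Omega_\alpha$, which by (a) is bounded by ${\tt 0}B$ and ${\tt 1}B$. For sufficiency, the two-sided inequalities imply that $A$ is recurrent and that the boundary sequences ${\tt 0}B$ and ${\tt 1}B$ arise as subsequential limits of shifts of $A$; hence $B$ itself lies in the orbit closure of $A$, satisfies the hypothesis of (a), and is therefore characteristic Sturmian of some slope $\alpha$. Then $A$ belongs to $\Omega_\alpha$, so $A$ is Sturmian of slope $\alpha$; uniqueness of $B$ is immediate because $B$ is determined by $\alpha$.

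The hardest step is the balance argument in the converse of (a): converting a purely lexicographic constraint on the shifts of $A$ into a combinatorial balance property is the only place where the proof is not essentially bookkeeping. The bispecial-palindrome trick is classical, but arranging it so that the two lexicographic inequalities cancel cleanly to give the required contradiction needs care in the choice of the palindrome $w$. A related subtlety in (b) is justifying that ${\tt 0}B$ and ${\tt 1}B$ actually occur as shift-limits of $A$, which rests on first establishing that $A$ is itself recurrent.
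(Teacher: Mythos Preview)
The paper does not actually prove this theorem: it is stated as a known result with a pointer to the survey \cite{Allouche&Glen:2008}, so there is no ``paper's own proof'' to compare against. Your sketch therefore stands on its own, and for part~(a) it is essentially correct. In particular, your balance argument works cleanly: if $0w0$ and $1w1$ both occur, then from $0A \le \sigma^{k_0}A = 0w0x$ one gets $A \le w0x$, while from $\sigma^{k_1}A = 1w1y \le 1A$ one gets $w1y \le A$, and $w1y \le A \le w0x$ is impossible. Once $A$ is known to be Sturmian of slope $\alpha$, your lower/upper-bound comparison with $0c_\alpha$, $1c_\alpha$ forces $A = c_\alpha$.

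For part~(b), however, your proposed route for sufficiency has a genuine gap. You assert that the hypotheses force $A$ to be recurrent and that $0B$, $1B$ arise as subsequential limits of shifts of $A$, but neither claim is justified, and I do not see how to establish them without already knowing that $A$ is Sturmian. The clean fix is simply to rerun the balance argument from~(a) with $B$ in place of $A$ on the outside: if $0w0$ and $1w1$ are factors of $A$, then $0B \le 0w0x$ and $1w1y \le 1B$ give $w1y \le B \le w0x$, again a contradiction. Hence $A$ is balanced and aperiodic, so Sturmian of some slope $\alpha$; then comparing the given bounds $0B$, $1B$ with the true extremes $0c_\alpha$, $1c_\alpha$ of $\Omega_\alpha$ yields $B \le c_\alpha \le B$, proving both that $B = c_\alpha$ and that $B$ is unique. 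This avoids the recurrence/shift-limit detour entirely.
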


Theorem~\ref{extr} easily implies the following corollary.

\bigskip

\begin{corollary}
The lexicographically least (resp.~largest) sequence in the orbit closure of 
a Sturmian sequence $A$ is the sequence ${\tt 0}B$ (resp.~${\tt 1}B$)
where $B$ is the
characteristic sequence with the same slope as $A$.
\end{corollary}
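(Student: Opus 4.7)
The plan is to extract the corollary directly from Theorem~\ref{extr}(b), once it is established that both $\mathtt{0}B$ and $\mathtt{1}B$ lie in $\cl(\orb(A))$. I will prove the ``least'' case in detail; the ``largest'' case is symmetric.

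Step one: the bounds. Theorem~\ref{extr}(b) applied to $A$ gives $\mathtt{0}B \le \sigma^k A \le \mathtt{1}B$ for every $k \ge 0$. The set $\{u : u \ge \mathtt{0}B\}$ is closed in the product topology, because if $u < \mathtt{0}B$ then $u$ and $\mathtt{0}B$ first disagree at some finite position $i$, and every sequence $v$ close enough to $u$ agrees with $u$ on positions $0,\ldots,i$ and hence also satisfies $v < \mathtt{0}B$. Since every $C \in \cl(\orb(A))$ is a limit of shifts $\sigma^{k_n} A$, we therefore get $\mathtt{0}B \le C$ (and symmetrically $C \le \mathtt{1}B$) for every $C \in \cl(\orb(A))$.

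Step two: attainment. I need to show $\mathtt{0}B \in \cl(\orb(A))$. By the factor-based characterization of the orbit closure recalled in the introduction, this reduces to proving that for every $n \ge 0$ the word $\mathtt{0}\,b_0 b_1 \cdots b_{n-1}$ occurs as a factor of $A$. Since any Sturmian sequence with the given slope has the same language of factors as $B$, it suffices to verify that $\mathtt{0}\,b_0 \cdots b_{n-1}$ is a factor of $B$. Here I would invoke the standard fact from Sturmian combinatorics that the prefixes of a characteristic Sturmian sequence are exactly the left-special factors of the associated Sturmian language, so that both $\mathtt{0}\,b_0 \cdots b_{n-1}$ and $\mathtt{1}\,b_0 \cdots b_{n-1}$ appear as factors. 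Combining with the uniform recurrence of $A$, we can pick for each $n$ a position $k_n$ in $A$ where this prefix of $\mathtt{0}B$ starts; the sequence $\sigma^{k_n} A$ then converges to $\mathtt{0}B$, showing $\mathtt{0}B \in \cl(\orb(A))$. The argument for $\mathtt{1}B$ is identical.

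Putting the two steps together yields that $\mathtt{0}B$ is the lex-least and $\mathtt{1}B$ the lex-largest element of $\cl(\orb(A))$. The main obstacle I expect is step two: the bounds from Theorem~\ref{extr}(b) are immediate, but it is not automatic that they are attained inside the orbit closure. The cleanest route is to cite the left-special characterization of prefixes of the characteristic sequence; a heavier alternative would go through the rotation-encoding picture of Sturmian sequences and observe that $\mathtt{0}B$ and $\mathtt{1}B$ arise as the limits of shifts whose starting points in the circle encoding approach the two cut-points of the rotation.
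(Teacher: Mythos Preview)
Your proposal is correct and follows the same outline as the paper's proof: both derive the corollary from the inequalities of Theorem~\ref{extr}(b) by arguing that these bounds are sharp, i.e., actually attained in the orbit closure. The paper's proof, however, is only a one-line sketch (``it is not difficult to see that the inequalities above are optimal''), whereas you supply the missing details: the closedness of $\{u : u \ge \mathtt{0}B\}$ for Step~1, and the left-special characterization of prefixes of the characteristic word together with the fact that Sturmian sequences of equal slope share the same factor set for Step~2. Your appeal to uniform recurrence is unnecessary---once every prefix of $\mathtt{0}B$ is a factor of $A$, membership in $\cl(\orb(A))$ follows directly from the factor-based description of the orbit closure---but it does no harm.
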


\begin{proof}
It is not difficult to see that the inequalities above are optimal in the
sense that, e.g., for a characteristic sequence $A$, we have
${\tt 0} A = \inf\{\sigma^k A \ :  \ k \geq 0\}$
and similarly for the other three
inequalities in Theorem~\ref{extr} above.  
\endpf
\end{proof}

\section{Acknowledgments}

We thank Kalle Saari for his help with Finnish and for pointing out an
error in a previous version.  We thank the referees for a careful reading
of the paper.

\end{document}